\RequirePackage{fix-cm}
\documentclass[twocolumn,epjc3]{svjour3}  
\usepackage{rotating}
\usepackage{xcolor}
\usepackage{amsmath}
\usepackage{hyphenat}
\smartqed  % flush right qed marks, e.g. at end of proof
\usepackage[english]{babel}
\usepackage{microtype}
\usepackage{tikz}
\usetikzlibrary{arrows.meta}
\usetikzlibrary{positioning,calc}
\RequirePackage{graphicx}

\usepackage[utf8]{inputenc}
\usepackage{amsfonts}
\usepackage{amssymb}
\usepackage{cite}
\usepackage{float}
\usepackage{hyperref}
\usepackage{booktabs}
\usepackage[normalem]{ulem}
\usepackage{orcidlink}
\usepackage{bm}

\journalname{Eur. Phys. J. C}

\begin{document} 

\title{Constitutive Geometry of the Effective Vacuum Sector of
  Pseudo-Complex General Relativity}

%\subtitle{Do you have a subtitle?\\ If so, write it here}

%\titlerunning{Short form of title}        % if too long for running head

\author{Fridolin Weber\thanksref{e1,addr1,addr2}\orcidlink{0000-0002-5020-1906}
        \and
        Peter O. Hess\thanksref{e2,addr3,addr4}\orcidlink{0000-0002-6228-7348} \and 
        Cesar A. Zen Vasconcellos\thanksref{e3,addr5,addr6}\orcidlink{0000-0002-9814-5317}
}

\thankstext{e1}{e-mail: fweber@sdsu.edu}
\thankstext{e2}{e-mail: hess@nucleares.unam.mx}
\thankstext{e3}{e-mail: cesaraugustozenvasconcellos@gmail.com}

\institute{Department of Physics, San Diego State University, San Diego, CA, 92182, USA \label{addr1}
           \and
           Department of Physics, University of California San Diego, 
La Jolla, CA, 92093, USA \label{addr2}
           \and
 Instituto de Ciencias Nucleares, Universidad Nacional Autónoma de México, Mexico City, 04510, Mexico \label{addr3}
           \and
Frankfurt Institute for Advanced Studies (FIAS), 
J.W. von Goethe Universität, Frankfurt am Main, Germany \label{addr4}
           \and
          International Center for Relativistic Astrophysics Network (ICRANet), 
Pescara, Italy \label{addr5}
           \and
   Universidade Federal do Rio Grande do Sul, Porto Alegre, RS, 91501-970, Brazil \label{addr6}       
}

\date{Received: date / Accepted: date}
% The correct dates will be entered by the editor

\maketitle

\begin{abstract} Effective vacuum sectors in modified theories of gravity
  are often represented by anisotropic stress-energy tensors, even
  when the underlying source is geometric rather than material. This
  raises a broader question: can such effective stresses be
  characterized intrinsically through a covariant constitutive
  structure, rather than modeled phenomenologically as exotic fluids?
  We address this question in pseudo-complex general relativity
  (pcGR), where a constrained variational formulation identifies the
  effective vacuum source as geometric in origin but does not uniquely
  determine the corresponding geometric constraint sector.

For the static pseudo-complex Schwarzschild solution, we show that the
effective vacuum tensor separates naturally into an isotropic
contribution and a purely transverse contribution associated with the
intrinsic geometry of the two-dimensional subspace orthogonal to the
timelike and radial directions. This decomposition leads to a
constitutive formulation based on the transverse metric. Its
infinitesimal deformations split uniquely into trace and traceless
shear modes, with
the static solution having only an isotropic transverse response.
 Within the class of local constitutive variables that depend
algebraically on the transverse metric and whose first variation
generates the trace projector with state-independent normalization,
the logarithmic transverse area strain is the minimal scalar
constitutive coordinate.

The corresponding local hyperelastic action reproduces the complete
algebraic structure of the static effective vacuum tensor. Matching to
the known pseudo-complex Schwarzschild solution then determines the
trace-sector constitutive potential uniquely within this constitutive
class and recovers the full radial dependence of the effective vacuum
density.  The rotating pseudo-complex Kerr solution retains the same
isotropic-plus-transverse tensorial decomposition but introduces
additional rotational dependence that cannot be captured by the static
trace-sector law alone,
thereby suggesting the traceless shear sector as the
natural geometric channel for extending the constitutive framework.

The resulting construction provides a minimal local constitutive
realization of the static effective vacuum sector rather than a
derivation of the full pseudo-complex theory. More generally, it shows
how an anisotropic effective vacuum source of geometric origin can be
organized as a constitutive response of an intrinsic transverse
geometry, providing a systematic framework for constraining and
classifying admissible geometric source sectors in extended
gravitational theories.
\end{abstract}

%%%%%%%%%%%%%%%%%%%%%%%%%%%%%%%%%%%%%%%%%%%%%%%%%%%%%%%%%%%%%%%%%%%%%%%%%%
\section{Introduction}
%%%%%%%%%%%%%%%%%%%%%%%%%%%%%%%%%%%%%%%%%%%%%%%%%%%%%%%%%%%%%%%%%%%%%%%%%%

Effective vacuum contributions arise in a broad class of extensions of
general relativity and in models of strong-field compact objects, where
they are frequently represented by anisotropic stress-energy tensors in
an effective Einstein equation. Although such a representation is
convenient, it can obscure an important distinction: the effective
source need not describe an independent material fluid. When the
additional stress is generated by the underlying geometric or
variational structure of the theory, the more fundamental question is
whether its anisotropy can be characterized directly in geometric
terms. In particular, one may ask whether an effective vacuum source can
be described as the constitutive response of an intrinsic geometric
sector, with its stress structure determined by geometric deformation
variables rather than by an independently prescribed exotic medium.

Pseudo-complex general relativity (pcGR) provides a particularly useful
setting in which to investigate this question. In pcGR, spacetime
coordinates and geometric quantities are promoted to pseudo-complex
variables, and the ordinary Einstein--Hilbert stationarity condition is
replaced by a weaker variational condition associated with the
zero-divisor structure of the pseudo-complex algebra
\cite{HessGreiner2009,HessSchaferGreiner2015}. After projection onto
ordinary spacetime, the field equations may be written in the effective
Einstein form
\begin{equation}
G_{\mu\nu} = 8\pi G \left( T_{\mu\nu}^{\rm matter}  + T_{\mu\nu}^{\rm
  pc} \right).
\label{eq:introEffectiveEinstein}
\end{equation}
The additional contribution \(T_{\mu\nu}^{\rm pc}\) modifies the
strong-field geometry of the known static and rotating compact-object
solutions while preserving the weak-field behavior of general
relativity. It is commonly represented as an anisotropic effective
stress-energy tensor and has been described phenomenologically as an
effective vacuum fluid, dark-energy-like source, or vacuum-polarization
contribution \cite{HessGreiner2009,Caspar2012}. Such descriptions are
useful at the level of the projected field equations, but they do not by
themselves identify the geometric structure responsible for the
effective source.

A constrained variational formulation of pcGR provides a more precise
starting point. In that framework, the pseudo-complex variational
condition is interpreted as projected stationarity, and the effective
source arises as the variational response of a covariant constraint
sector relating the two projected metric components. The effective
stress tensor is therefore geometric in origin rather than an
independently introduced matter contribution. The constrained
variational framework does not, however, uniquely determine the
functional form of the constraint sector. The problem of selecting and
organizing physically admissible realizations of this geometric sector
consequently remains open.

The purpose of the present work is to determine whether this
undetermined geometric sector admits an intrinsic constitutive
description. We do not assume that spacetime itself is a physical
elastic medium. Instead, we ask whether the projected effective vacuum
source can be organized as the constitutive response of its transverse
geometry. To address this question, we use the mathematical structure
of classical hyperelastic constitutive theory and its geometric
formulations
\cite{TruesdellNoll2004,Ogden1984,MarsdenHughes1994,Holzapfel2000},
but apply it entirely to the effective geometric sector of pcGR. The
resulting constitutive variables characterize deformations of the
transverse geometry and do not introduce an additional material medium
or matter degree of freedom.

The motivation comes from the algebraic structure of the static
effective vacuum tensor. For the pseudo-complex Schwarzschild solution
\cite{Caspar2012}, the radial equation of state satisfies
\begin{equation}
p_r=-\rho,
\label{eq:introRadialEOS}
\end{equation}
which permits the effective source to be rewritten as
\begin{equation}
T_{\mu\nu}^{\rm pc} = -\rho\,g_{\mu\nu} + \left( p_\perp-p_r \right)
q_{\mu\nu},
\label{eq:introDecomposition}
\end{equation}
where \(q_{\mu\nu}\) is the projector onto the two-dimensional subspace
orthogonal to the static timelike and radial directions. Equation
(\ref{eq:introDecomposition}) separates the source into an isotropic
contribution proportional to the spacetime metric and a purely
transverse response. This decomposition suggests that the nontrivial
constitutive information is carried by the geometry of the transverse
two-space.

Infinitesimal deformations of the transverse metric admit a natural
irreducible decomposition.
Every infinitesimal transverse deformation splits uniquely into a trace
part and a traceless part,
\begin{equation}
\delta q_{AB} = \frac{1}{2} q_{AB} q^{CD} \delta q_{CD} + \delta
q_{AB}^{\rm TF}.
\label{eq:introTraceShear}
\end{equation}
The trace mode changes the local transverse area, whereas the traceless
mode describes shear at fixed area. Static spherical symmetry permits
only an isotropic transverse response at first order and therefore
selects the trace channel. The scalar coordinate naturally associated
with this channel is the logarithmic transverse area strain,
\begin{equation}
\zeta = - \ln \left( \frac{\sqrt q} {\sqrt{q_0}} \right),
\label{eq:introZeta}
\end{equation}
whose metric variation is
\begin{equation}
\delta\zeta = \frac{1}{2} q_{\mu\nu} \delta g^{\mu\nu}.
\label{eq:introZetaVariation}
\end{equation}
A derivation of this identity is given in \ref{app:geometry}.
The transverse projector appearing in the effective source is
therefore generated directly by the metric variation of the
logarithmic area strain.

Under the assumptions that the constitutive sector is local, contains
no derivatives of the transverse metric, is transversely isotropic, and
depends on a single scalar trace variable, the minimal action takes the
form
\begin{equation}
I_{\rm con} = - \int d^4x\,\sqrt{-g}\,W(\zeta).
\label{eq:introConstitutiveAction}
\end{equation}

Because the constitutive response is derived from a scalar
strain-energy density, the resulting constitutive framework is
hyperelastic in the terminology of continuum mechanics
\cite{Ogden1984,Holzapfel2000}. Throughout this paper, however, the
term hyperelastic refers solely to the mathematical structure of the
constitutive model and does not imply the existence of a physical
material medium.

Under the restricted metric variation described below,
Eq.~\eqref{eq:introConstitutiveAction} gives
\begin{equation}
T^{\rm con}_{\mu\nu} = -W(\zeta)g_{\mu\nu} +
W_{,\zeta}(\zeta)q_{\mu\nu}.
\label{eq:introConstitutiveTensor}
\end{equation}
Comparison of Eqs.~(\ref{eq:introConstitutiveTensor}) and
(\ref{eq:introDecomposition}) establishes the constitutive
correspondence
\begin{equation}
W(\zeta)=\rho, \qquad W_{,\zeta}(\zeta)=p_\perp-p_r.
\label{eq:introCorrespondence}
\end{equation}
For the static pseudo-complex Schwarzschild solution, this
correspondence uniquely determines the associated trace-sector
constitutive potential,
\begin{equation}
W(\zeta) = W_0 \exp\!\left(\frac{n+2}{2}\zeta\right).
\label{eq:introExponentialPotential}
\end{equation}
The derivation of this result is presented in
Section~\ref{sec:static-constitutive-law} as an inverse constitutive
problem. The resulting exponential potential reproduces the complete
radial dependence of the static effective vacuum density.

The resulting construction should not be interpreted as a unique
derivation of the complete pseudo-complex theory from elasticity.
The original pseudo-complex variational principle determines the
projected variational structure but does not uniquely select a
transverse constraint tensor or exclude additional off-shell
invariants.
Likewise,
the static solution determines the constitutive potential only along the
isotropic trace sector. Terms involving transverse shear, curvature, or
other invariants may remain invisible on the exact spherical
background.

The claim made here is therefore more specific. We show that the
static effective vacuum sector admits a minimal local hyperelastic
realization on the transverse two-space and that, within the stated
constitutive class, the logarithmic area strain is the natural scalar
variable conjugate to the transverse isotropic stress.  The rotating
pseudo-complex Kerr solution then provides evidence that additional
transverse constitutive information becomes relevant once spherical
symmetry is broken, with the traceless shear sector providing the
natural geometric extension.  In this sense, the constitutive
formulation offers a systematic language for organizing the otherwise
undetermined geometric constraint sector of pcGR.

The paper is organized as follows. Section~\ref{sec:effective-vacuum}
reviews the effective vacuum tensor of the static pseudo-complex
Schwarzschild solution and derives its decomposition into isotropic and
transverse contributions. Section~\ref{sec:transverse-geometry}
develops the intrinsic geometry of the transverse two-space and its
irreducible decomposition into trace and shear modes. The corresponding
minimal local constitutive action and its associated stress tensor are
derived in Section~\ref{sec:minimal-action}. In
Section~\ref{sec:static-constitutive-law}, the known static pcGR  
solution is used to reconstruct the trace-sector constitutive potential.
Section~\ref{sec:rotation} discusses the rotating extension and the role
of transverse shear. Section~\ref{sec:constraint-selection} relates the
constitutive formulation to the constrained variational framework and
examines its implications for selecting admissible geometric constraint
sectors. The final section summarizes the principal results and outlines
directions for future work.

Three appendices provide supporting mathematical details. \ref{app:geometry}
derives the geometric identities relating variations of the transverse
metric to the logarithmic transverse area strain. \ref{app:representation} presents
the representation-theoretic decomposition of transverse metric
deformations into trace and shear components. \ref{app:minimality} proves the
minimality theorem for the logarithmic transverse area strain within the
class of local constitutive theories considered in this work.

%%%%%%%%%%%%%%%%%%%%%%%%%%%%%%%%%%%%%%%%%%%%%%%%%%%%%%%%%%%%%%%%%%%%%%%%%%
\section{Effective Vacuum Sector of Pseudo-Complex General Relativity}
\label{sec:effective-vacuum}
%%%%%%%%%%%%%%%%%%%%%%%%%%%%%%%%%%%%%%%%%%%%%%%%%%%%%%%%%%%%%%%%%%%%%%%%%%

The purpose of this section is to identify the geometric structure of the
effective vacuum sector that any constitutive description must reproduce.
Throughout this work we restrict attention initially to the static,
spherically symmetric pseudo-complex Schwarzschild solution \cite{Caspar2012}. The rotating
case will be discussed in Section~\ref{sec:rotation}.

The projected field equations of pseudo-complex general relativity may be
written in the effective Einstein form \cite{HessGreiner2009,Caspar2012}
\begin{equation}
G_{\mu\nu} = 8\pi G \left( T_{\mu\nu}^{\rm matter} + T_{\mu\nu}^{\rm
  pc} \right),
\label{eq:effectiveEinstein}
\end{equation}
where the tensor \(T_{\mu\nu}^{\rm pc}\) represents the effective
geometric source generated by the pseudo-complex extension.
For the static Schwarzschild solution the spacetime metric is \cite{Caspar2012}
\begin{equation}
ds^2 = -f(r)\,dt^2 + \frac{dr^2}{f(r)} + r^2d\Omega^2,
\label{eq:metric}
\end{equation}
with
\begin{equation}
f(r) = 1 - \frac{2M}{r} + \frac{C}{r^n},
\label{eq:fr}
\end{equation}
where
\begin{equation}
C = \frac{B_n} {(n-1)(n-2)}.
\end{equation}

The effective vacuum tensor may be expressed in the standard anisotropic
form
\begin{equation}
T_{\mu\nu}^{\rm pc} = \rho\,u_\mu u_\nu + p_r\,s_\mu s_\nu +
p_\perp\,q_{\mu\nu},
\label{eq:anisotropic}
\end{equation}
where
\begin{equation}
u_\mu u^\mu=-1, \qquad s_\mu s^\mu=1, \qquad u_\mu s^\mu=0,
\end{equation}
and
\begin{equation}
q_{\mu\nu} = g_{\mu\nu} + u_\mu u_\nu - s_\mu s_\nu
\label{eq:qprojector}
\end{equation}
projects onto the two-dimensional subspace orthogonal to the timelike
and radial directions. The projector satisfies
\begin{equation}
q_{\mu\nu}u^\nu=0, \qquad q_{\mu\nu}s^\nu=0, \qquad q^\mu{}_\mu=2,
\end{equation}
and therefore defines the intrinsic geometry of the transverse
two-space.

%%%%%%%%%%%%%%%%%%%%%%%%%%%%%%%%%%%%%%%%%%%%%%%%%%%%%%%%%%%%%%

\subsection{Static effective equation of state}

Evaluation of the Einstein tensor gives
\begin{equation}
\rho = \frac{B_n} {8\pi G(n-2)r^{n+2}},
\label{eq:rho}
\end{equation}
together with
\begin{equation}
p_r = -\rho,
\label{eq:pr}
\end{equation}
and
\begin{equation}
p_\perp - p_r = \frac{(n+2)B_n} {16\pi G(n-2)r^{n+2}}.
\label{eq:deltap}
\end{equation}
Consequently,
\begin{equation}
  \frac{p_\perp - p_r}{\rho} = \frac{n+2}{2}.
  \label{eq:staticRatio}
\end{equation}

Equation~(\ref{eq:staticRatio}) is remarkable because the ratio of
transverse anisotropic stress to effective vacuum energy density is
constant throughout the static vacuum solution. As will be shown in
Section~\ref{sec:static-constitutive-law}, this immediately determines the
constitutive law of the minimal hyperelastic realization.

%%%%%%%%%%%%%%%%%%%%%%%%%%%%%%%%%%%%%%%%%%%%%%%%%%%%%%%%%%%%%%

\subsection{Geometric decomposition}

Using the equation of state (\ref{eq:pr}), the radial pressure may be
eliminated, $ p_r=-\rho$, so that Eq.~(\ref{eq:anisotropic}) becomes
\begin{equation}
T_{\mu\nu}^{\rm pc} = \rho \left( u_\mu u_\nu - s_\mu s_\nu \right) +
p_\perp q_{\mu\nu}.
\end{equation}
Using the identity
\begin{equation}
g_{\mu\nu} = - u_\mu u_\nu + s_\mu s_\nu + q_{\mu\nu},
\end{equation}
one finds
\begin{equation}
u_\mu u_\nu - s_\mu s_\nu = - g_{\mu\nu} + q_{\mu\nu},
\end{equation}
which immediately yields
\begin{equation}
T_{\mu\nu}^{\rm pc} = - \rho g_{\mu\nu} + (p_\perp-p_r) q_{\mu\nu}.
\label{eq:decomposition}
\end{equation}
Equation (\ref{eq:decomposition}) is the central structural observation
of the present work.

%%%%%%%%%%%%%%%%%%%%%%%%%%%%%%%%%%%%%%%%%%%%%%%%%%%%%%%%%%%%%%

\subsection{Constitutive interpretation}

Equation (\ref{eq:decomposition}) separates the effective source into two
geometrically distinct pieces.
The first contribution is isotropic and proportional to the spacetime
metric.
The second acts exclusively on the transverse geometry through the
projector \(q_{\mu\nu}\).
This separation has an important consequence.
The anisotropic part of the effective vacuum sector is entirely
encoded in the transverse two-space.
No independent radial
stress degree of freedom survives after imposing the effective
equation of state (\ref{eq:pr}).   Rather than introducing
Eq.~(\ref{eq:decomposition}) as the stress tensor of an independent
anisotropic vacuum fluid, we shall investigate whether it can be
represented as the constitutive response of the transverse geometry.
The remainder of the paper develops this idea systematically.
First, the intrinsic geometry of the transverse two-space is analysed
independently of the pseudo-complex field equations. We then show that
its trace deformation naturally generates the projector
\(q_{\mu\nu}\), leading to a minimal local constitutive action whose
variation reproduces Eq.~(\ref{eq:decomposition}).

%%%%%%%%%%%%%%%%%%%%%%%%%%%%%%%%%%%%%%%%%%%%%%%%%%%%%%%%%%%%%%%%%%%%%%%%%%
\section{Geometry of the Transverse Two-Space}
\label{sec:transverse-geometry}
%%%%%%%%%%%%%%%%%%%%%%%%%%%%%%%%%%%%%%%%%%%%%%%%%%%%%%%%%%%%%%%%%%%%%%%%%%

The decomposition obtained in the previous section shows that the
anisotropic part of the effective vacuum sector acts exclusively on the
two-dimensional subspace orthogonal to the timelike and radial
directions. The purpose of the present section is therefore to develop
the intrinsic geometry of this transverse space independently of any
constitutive assumptions.

Rather than introducing elasticity at the outset, we first identify the
natural kinematic variables associated with the transverse geometry.
Hyperelasticity will emerge only after these geometric structures have
been established.

%%%%%%%%%%%%%%%%%%%%%%%%%%%%%%%%%%%%%%%%%%%%%%%%%%%%%%%%%%%%%%

\subsection{Transverse geometry}

The effective vacuum tensor introduced in the previous section singles
out the two-dimensional subspace orthogonal to the timelike and radial
directions. This transverse geometry provides the natural kinematic
setting for the constitutive description developed in the remainder of
this work.

Let $u^\mu$ denote the unit timelike vector field and $s^\mu$ the unit
spacelike radial vector field, satisfying
\begin{equation}
u^\mu u_\mu=-1,\qquad
s^\mu s_\mu=1,\qquad
u^\mu s_\mu=0.
\end{equation}
The projector onto the orthogonal two-dimensional subspace is then
\begin{equation}
q_{\mu\nu}
=
g_{\mu\nu}
+
u_\mu u_\nu
-
s_\mu s_\nu.
\end{equation}
This projector induces the intrinsic metric $q_{AB}$ on the transverse
manifold, whose determinant is denoted by
\begin{equation}
q=\det(q_{AB}).
\end{equation}
The intrinsic metric $q_{AB}$ completely characterizes the geometry of
the transverse two-space and serves as the fundamental configuration
variable of the constitutive theory. Metric-based formulations of
elasticity and surface elasticity provide the corresponding
mathematical precedent \cite{MarsdenHughes1994,GurtinMurdoch1975,
  EfratiSharonKupferman2009,Yavari2026}.  In the present construction,
constitutive quantities are built from local deformations of the
transverse metric. Unlike ordinary elasticity, however, no physical
material medium or material surface is introduced.
 The constitutive description is
entirely geometric and refers only to the effective vacuum sector of
pseudo-complex gravity.

Since the transverse manifold is two-dimensional, every infinitesimal
variation of its metric admits a unique decomposition into trace and
traceless parts. These correspond respectively to local changes of
transverse area and local shear deformations that preserve the area. The
derivation of this decomposition, together with its
representation-theoretic interpretation under $\mathrm{SO}(2)$, is
summarized in \ref{app:representation}.

Accordingly, the transverse metric defines the natural constitutive
configuration space of the effective vacuum sector. The following
subsections identify its irreducible kinematic modes and show that, for
the static pseudo-complex Schwarzschild solution, only the trace degree
of freedom enters the constitutive response.

%%%%%%%%%%%%%%%%%%%%%%%%%%%%%%%%%%%%%%%%%%%%%%%%%%%%%%%%%%%%%%

\subsection{Irreducible decomposition}

At each point of the transverse two-dimensional manifold, the tangent
space carries the natural action of the rotation group $SO(2)$.
Consequently, every infinitesimal deformation of the transverse metric
admits a unique decomposition into irreducible $SO(2)$ components,
\begin{equation}
\delta q_{AB} = \delta q^{\rm tr}_{AB} + \delta q^{\rm TF}_{AB},
\label{eq:split}
\end{equation}
where the trace component is
\begin{equation}
\delta q^{\rm tr}_{AB}
=
\frac12 q_{AB} q^{CD}\delta q_{CD},
\end{equation}
and the traceless component is
\begin{equation}
\delta q^{\rm TF}_{AB} = \delta q_{AB} - \frac12 q_{AB} q^{CD}\delta
q_{CD}.
\end{equation}
By construction,
\begin{equation}
q^{AB}\delta q^{\rm TF}_{AB}=0.
\end{equation}

The trace component describes local changes of the transverse area,
whereas the traceless component describes shear deformations that
preserve the area to first order. Representation-theoretically, the
space of symmetric rank-two tensors therefore decomposes as
\begin{equation}
\mathrm{Sym}^2(T^*\Sigma) = \mathbf{1} \oplus \mathbf{2},
\end{equation}
where $\mathbf{1}$ denotes the one-dimensional trace representation and
$\mathbf{2}$ the two-dimensional traceless symmetric representation of
$SO(2)$.

This decomposition is a purely geometric property of symmetric tensors
on a two-dimensional Riemannian manifold and is therefore independent
of the pseudo-complex field equations. The dynamics determines only
which of these geometric modes contribute to a given solution. As
shown below, the static pseudo-complex Schwarzschild solution excites
only the trace sector. In the rotating case, the trace-sector response
acquires dependence on additional rotational geometry.
Transverse shear invariants provide one natural possible extension of
the trace-sector potential, as occurs generally in constitutive
theories depending on the full surface metric
\cite{EfratiSharonKupferman2009,Yavari2026}.
The known rotating stress tensor, however, does not uniquely identify
such invariants \cite{Caspar2012}.

%%%%%%%%%%%%%%%%%%%%%%%%%%%%%%%%%%%%%%%%%%%%%%%%%%%%%%%%%%%%%%

\subsection{Trace deformation and transverse area}

The decomposition of the previous subsection immediately identifies the
trace mode as the unique infinitesimal deformation capable of changing
the intrinsic area of the transverse geometry. This follows from the
variation of the determinant of the transverse metric.

Using the standard determinant identity,
\begin{equation}
\delta\ln q = q^{AB}\delta q_{AB},
\end{equation}
one obtains
\begin{equation}
\delta\ln\sqrt{q} = \frac12 q^{AB}\delta q_{AB}.
\label{eq:dlogsqrtq}
\end{equation}
Thus the fractional variation of the transverse area element depends
only on the trace of the metric deformation.

By contrast, the traceless component satisfies
\begin{equation}
q^{AB}\delta q^{\rm TF}_{AB}=0,
\end{equation}
and therefore leaves the transverse area invariant to first order.
Consequently, infinitesimal deformations separate naturally into two
kinematically distinct classes: trace deformations, which modify the
local transverse area, and shear deformations, which preserve it.

This distinction is purely geometric and follows solely from the
two-dimensional structure of the transverse manifold. It therefore
precedes any constitutive assumptions or field equations. Since the
static pseudo-complex Schwarzschild solution excites only the trace
sector, the transverse area provides the natural scalar quantity from
which the constitutive variable will be constructed in the following
subsection.

%%%%%%%%%%%%%%%%%%%%%%%%%%%%%%%%%%%%%%%%%%%%%%%%%%%%%%%%%%%%%%

\subsection{Logarithmic transverse area strain}

Within the present two-dimensional, algebraic setting, the transverse
area provides the natural scalar associated with the trace deformation.
Accordingly, the constitutive variable describing the static effective
vacuum sector is naturally constructed from the transverse area element.

A natural dimensionless measure is the logarithmic transverse area
strain, analogous to the logarithmic strain used in finite
elasticity and to its geometric interpretation on deformation space
\cite{Hencky1928,NeffEidelOsterbrinkMartin2014},
\begin{equation}
\zeta = - \ln\!\left( \frac{\sqrt{q}} {\sqrt{q_0}} \right),
\label{eq:zeta}
\end{equation}
where $q_0$ denotes the determinant of the transverse metric in a
reference configuration. The logarithmic form measures fractional area
changes and vanishes in the reference state.

Its first variation follows directly from
Eq.~(\ref{eq:dlogsqrtq}),
\begin{equation}
\delta\zeta = - \frac12 q^{AB}\delta q_{AB}.
\label{eq:deltazeta}
\end{equation}
Expressing the variation in terms of the spacetime metric using
\begin{equation}
q^{AB}\delta q_{AB} = - q_{\mu\nu}\delta g^{\mu\nu},
\end{equation}
one obtains
\begin{equation}
\delta\zeta = \frac12 q_{\mu\nu}\delta g^{\mu\nu}.
\label{eq:spacetimevariation}
\end{equation}

Equation~(\ref{eq:spacetimevariation}) shows that the first variation of
the logarithmic transverse area strain is generated precisely by the
transverse projector appearing in the effective pseudo-complex vacuum
tensor. This establishes the direct geometric link between the
constitutive variable and the algebraic structure of the effective
vacuum sector.

As shown in \ref{app:minimality}, the logarithmic transverse
area strain is the minimal local constitutive coordinate whose first
variation generates the transverse trace projector with a
state-independent normalization.  It therefore provides the natural
constitutive variable for the minimal local constitutive theory
developed in the following section.

%%%%%%%%%%%%%%%%%%%%%%%%%%%%%%%%%%%%%%%%%%%%%%%%%%%%%%%%%%%%%%

\subsection{Minimal constitutive variable}

The logarithmic transverse strain introduced above is not merely a
convenient parametrization of the static geometry. Rather, it is singled
out by the constitutive structure of the effective vacuum sector.

Consider the class of local constitutive theories whose scalar
constitutive variable
\begin{enumerate}
\item depends locally on the transverse metric,
\item contains no derivatives of the transverse metric, and
\item couples only to the isotropic trace deformation with a
state-independent normalization.
\end{enumerate}
Within this class, the logarithmic transverse strain is the minimal
local scalar whose first variation is proportional to the trace
deformation $q^{AB}\delta q_{AB}$ of the transverse metric.
Consequently, up to normalization and an additive constant, it is the
unique local constitutive coordinate within the class defined above
whose first variation generates the transverse trace projector with a
state-independent normalization.

The proof of this minimality result is given in
\ref{app:minimality}. The argument establishes minimality only
within the constitutive class specified above. It does not imply that
more general constitutive theories are excluded.

Indeed, additional local or nonlocal constitutive invariants, such as
transverse shear or curvature invariants, may be incorporated in more
general constitutive models. Such additional contributions are not
determined by the exact static spherically symmetric solution and
therefore do not affect the minimal trace-sector reconstruction
developed here. They may become relevant when extending the theory
beyond the static trace sector, as discussed later for the rotating
pseudo-complex Kerr geometry.

%%%%%%%%%%%%%%%%%%%%%%%%%%%%%%%%%%%%%%%%%%%%%%%%%%%%%%%%%%%%%%

\subsection{Geometric interpretation}

The preceding analysis identifies two independent kinematic modes of the
transverse geometry. The first is the trace mode, represented by the
logarithmic transverse strain $\zeta$, which measures local changes of
the transverse area. The second is the traceless tensor
$\delta q^{\rm TF}_{AB}$, which describes transverse shear while
preserving the area to first order. Together these constitute the
elementary deformation modes of the transverse constitutive
configuration space.

As shown in \ref{app:representation}, these modes correspond to
the two irreducible representations of the local rotation group
$SO(2)$ acting on the transverse tangent space. The trace mode forms the
one-dimensional scalar representation, whereas the traceless symmetric
tensor forms the two-dimensional shear representation. This
representation-theoretic structure is entirely geometric and is
independent of the pseudo-complex field equations.

The effective vacuum solutions determine which of these geometric
modes are realized. For the static pseudo-complex Schwarzschild
solution, only the trace representation is excited, so the
constitutive response is completely described by the logarithmic
transverse strain.

The decomposition developed in this section therefore provides the kinematic
foundation of the constitutive theory. The following section constructs the
minimal local constitutive action for the static trace sector, while the
rotating extension is considered in Section~\ref{sec:rotation}.

%%%%%%%%%%%%%%%%%%%%%%%%%%%%%%%%%%%%%%%%%%%%%%%%%%%%%%%%%%%%%%%%%%%%%%%%%%
\section{Minimal Local Constitutive Action}
\label{sec:minimal-action}
%%%%%%%%%%%%%%%%%%%%%%%%%%%%%%%%%%%%%%%%%%%%%%%%%%%%%%%%%%%%%%%%%%%%%%%%%%

The previous section established that the transverse geometry possesses
two independent kinematic modes: a trace mode associated with local area
change and a traceless shear mode describing shape deformation at fixed
area. It also identified the logarithmic transverse strain as the
minimal local constitutive variable for the static trace sector.

For the exact static spherically symmetric background, only the trace
mode contributes to the effective vacuum tensor. The objective of the
present section is therefore to construct the minimal local
constitutive action compatible with this geometric and kinematic
structure. The resulting action provides the simplest constitutive
realization of the effective vacuum sector while remaining entirely
within the class of local constitutive theories identified in the
previous section.

%%%%%%%%%%%%%%%%%%%%%%%%%%%%%%%%%%%%%%%%%%%%%%%%%%%%%%%%%%%%%%

\subsection{Constitutive assumptions}

The construction is based on the following constitutive assumptions.

\begin{enumerate}
\item The constitutive description is local.
\item The constitutive action depends only on the local transverse
geometry and not on its derivatives.
\item The static background is transversely isotropic.
\item Only the trace deformation contributes to the constitutive
response.
\end{enumerate}

Under these assumptions, the constitutive state is completely
determined by the logarithmic transverse strain introduced in
Eq.~(\ref{eq:zeta}), whose minimality within the present constitutive
class is proved in \ref{app:minimality}.

These assumptions define the minimal constitutive model appropriate to
the static pseudo-complex Schwarzschild solution. They do not exclude
more general constitutive theories containing additional local or
nonlocal invariants, such as transverse shear or curvature terms, which
become relevant beyond the static trace sector considered here.

%%%%%%%%%%%%%%%%%%%%%%%%%%%%%%%%%%%%%%%%%%%%%%%%%%%%%%%%%%%%%%

\subsection{Constitutive action}

The constitutive assumptions formulated above restrict the admissible
action to depend only on the local logarithmic transverse strain.
In hyperelastic constitutive theory, the stress response is generated
from a scalar stored-energy function of the chosen strain variables
\cite{TruesdellNoll2004,Ogden1984,Holzapfel2000,Yavari2026}.
Within the present constitutive class, the corresponding minimal local
action for the geometric sector is
\begin{equation}
I_{\rm con} = - \int d^4x\,\sqrt{-g}\,W(\zeta),
\label{eq:constitutiveAction}
\end{equation}
where $W(\zeta)$ is an arbitrary constitutive potential.
At this stage, no specific constitutive law has been imposed.
Equation~(\ref{eq:constitutiveAction}) therefore represents the most general local
trace-sector constitutive realization of the static effective vacuum
sector within the constitutive class defined by the assumptions of the
previous subsection. All
constitutive freedom is contained in the choice of the scalar potential
$W(\zeta)$.
The pseudo-complex field equations do not determine the functional form
of $W$ directly within the present constitutive construction. Instead,
$W$ is reconstructed by requiring that the stress tensor derived from
the constitutive action reproduce the effective vacuum sector of the
known static pseudo-complex Schwarzschild solution. This inverse
constitutive problem is carried out in the following section.

%%%%%%%%%%%%%%%%%%%%%%%%%%%%%%%%%%%%%%%%%%%%%%%%%%%%%%%%%%%%%%

\subsection{Variation}

The constitutive response is obtained by varying the
action (\ref{eq:constitutiveAction}) with respect to the spacetime
metric, subject to the restriction that the timelike and radial
directions defining the transverse splitting are held fixed.
Under this restricted metric variation, the dependence of the
logarithmic transverse strain on the metric is through the induced
transverse geometry, as given by Eq.~(\ref{eq:spacetimevariation}).
Using the standard identity
\begin{equation}
\delta\sqrt{-g} = -\frac{1}{2}\sqrt{-g}\,g_{\mu\nu}\,\delta
g^{\mu\nu},
\end{equation}
together with Eq.~(\ref{eq:spacetimevariation}), one finds
\begin{equation}
\delta I_{\rm con} = \frac{1}{2} \int d^4x\,\sqrt{-g} \left[ W
  g_{\mu\nu} - W_{,\zeta}q_{\mu\nu} \right] \delta g^{\mu\nu}.
\label{eq:actionvariation}
\end{equation}
Defining the constitutive stress tensor by
\begin{equation}
\delta I_{\rm con} = -\frac{1}{2} \int d^4x\,\sqrt{-g}\, T^{\rm
  con}_{\mu\nu}\,\delta g^{\mu\nu},
\end{equation}
then gives
\begin{equation}
T^{\rm con}_{\mu\nu} = -W(\zeta)g_{\mu\nu} + W_{,\zeta}q_{\mu\nu}.
\label{eq:Tcon}
\end{equation}
Equation~(\ref{eq:Tcon}) should therefore be understood as the
covariant representative of the restricted transverse constitutive
response; it is not intended as a derivation of the most general
four-dimensional metric stress tensor.

The variation separates naturally into an isotropic contribution
proportional to the spacetime metric and a purely transverse
contribution proportional to the transverse projector. This is
precisely the tensorial structure identified in the effective
pseudo-complex vacuum sector. The constitutive potential determines
only the scalar coefficients multiplying these two geometric tensors.

%%%%%%%%%%%%%%%%%%%%%%%%%%%%%%%%%%%%%%%%%%%%%%%%%%%%%%%%%%%%%%

\subsection{Constitutive stress tensor}

Under the restricted metric variation introduced above, the
constitutive stress tensor is represented covariantly by
\begin{equation}
\delta I_{\rm con} = -\frac{1}{2} \int d^4x\,\sqrt{-g}\, T^{\rm
  con}_{\mu\nu}\,\delta g^{\mu\nu}.
\label{eq:stressdefinition}
\end{equation}
This construction is the geometric analogue of the hyperelastic
stress obtained from the variation of a stored-energy density with
respect to the relevant configuration variables
\cite{Ogden1984,Holzapfel2000,Yavari2026}.
Comparison with
Eq.~(\ref{eq:actionvariation}) immediately yields
\begin{equation}
T^{\rm con}_{\mu\nu} = - W(\zeta)\,g_{\mu\nu} +
W_{,\zeta}\,q_{\mu\nu},
\label{eq:constitutiveTensor}
\end{equation}
where the constitutive potential determines the isotropic contribution,
while its derivative with respect to the logarithmic transverse strain
determines the transverse anisotropic response.

Equation~(\ref{eq:constitutiveTensor}) is the fundamental constitutive
result of the present work. It shows that the constitutive stress tensor
possesses exactly the same tensorial decomposition as the effective
vacuum tensor of pseudo-complex general relativity: an isotropic term
proportional to the spacetime metric and a purely transverse term
proportional to the transverse projector. At this stage, however, the
constitutive potential remains arbitrary. The following section
determines its functional form by matching the constitutive stress
tensor to the known effective vacuum solution of the static
pseudo-complex Schwarzschild geometry.

%%%%%%%%%%%%%%%%%%%%%%%%%%%%%%%%%%%%%%%%%%%%%%%%%%%%%%%%%%%%%%

\subsection{Matching to the effective vacuum sector}

The constitutive stress tensor derived above possesses exactly the same
tensorial decomposition as the effective vacuum tensor obtained from the
projected pseudo-complex field equations. Recalling the result of
Section~\ref{sec:effective-vacuum},
\begin{equation}
T_{\mu\nu}^{\rm pc}
=
-
\rho\,g_{\mu\nu}
+
(p_\perp-p_r)\,q_{\mu\nu},
\end{equation}
comparison with Eq.~(\ref{eq:constitutiveTensor}) immediately yields the
constitutive identifications
\begin{equation}
W(\zeta) = \rho,
\end{equation}
and
\begin{equation}
W_{,\zeta} = p_\perp-p_r.
\label{eq:constitutiveMatch}
\end{equation}

These relations establish the constitutive correspondence between the
minimal local constitutive theory and the effective vacuum sector of
pseudo-complex gravity. The constitutive potential is identified with
the effective vacuum energy density, while its derivative with respect
to the logarithmic transverse strain determines the transverse
anisotropic response.

The tensorial matching is entirely algebraic and does not rely on the
explicit radial form of the static solution. The latter enters only in
the following section, where the constitutive potential is reconstructed
from the known effective vacuum profile.

The correspondence established above completes the construction of the
minimal local constitutive framework. At this stage, the tensorial
structure of the effective vacuum sector has been reproduced without
specifying a constitutive law. The remaining task is to determine the
constitutive potential by matching the constitutive stress tensor to the
known static pseudo-complex Schwarzschild solution.

%%%%%%%%%%%%%%%%%%%%%%%%%%%%%%%%%%%%%%%%%%%%%%%%%%%%%%%%%%%%%%%%%%%%%%%%%%
\section{Constitutive Law of the Static Vacuum Sector}
\label{sec:static-constitutive-law}
%%%%%%%%%%%%%%%%%%%%%%%%%%%%%%%%%%%%%%%%%%%%%%%%%%%%%%%%%%%%%%%%%%%%%%%%%%

The previous section established the minimal local constitutive
framework and showed that its stress tensor possesses exactly the same
tensorial structure as the effective vacuum sector of pseudo-complex
gravity. At that stage, however, the constitutive potential
$W(\zeta)$ remained undetermined.

The purpose of the present section is to determine this potential from
the known static pseudo-complex Schwarzschild solution. This constitutes
an inverse constitutive problem: rather than postulating a constitutive
law and deriving the stress tensor, the known effective vacuum sector is
used to reconstruct the constitutive potential.

%%%%%%%%%%%%%%%%%%%%%%%%%%%%%%%%%%%%%%%%%%%%%%%%%%%%%%%%%%%%%%

\subsection{Static effective vacuum}

For the static pseudo-complex Schwarzschild solution, the effective
vacuum sector is characterized by
\begin{equation}
\rho(r) = \frac{B_n} {8\pi G(n-2)r^{n+2}},
\label{eq:rhoagain}
\end{equation}
together with
\begin{equation}
p_r=-\rho,
\end{equation}
and
\begin{equation}
p_\perp-p_r = \frac{(n+2)B_n} {16\pi G(n-2)r^{n+2}}.
\end{equation}
It follows immediately that
\begin{equation}
\frac{p_\perp-p_r}{\rho} = \frac{n+2}{2}.
\label{eq:ratioagain}
\end{equation}

Although the individual stress components vary with the radial
coordinate, their ratio is constant. This property makes it possible to
reconstruct the constitutive potential independently of the explicit
radial dependence of the solution.

%%%%%%%%%%%%%%%%%%%%%%%%%%%%%%%%%%%%%%%%%%%%%%%%%%%%%%%%%%%%%%
\subsection{Inverse constitutive problem}

The constitutive correspondence established in
Section~\ref{sec:minimal-action} identifies the constitutive potential
with the effective vacuum energy density and its derivative with the
transverse anisotropic response,
\begin{equation}
W=\rho,
\qquad
W_{,\zeta}=p_\perp-p_r.
\end{equation}
Substituting these relations into
Eq.~(\ref{eq:ratioagain}) immediately yields
\begin{equation}
\frac{dW}{d\zeta} = \frac{n+2}{2}\,W.
\label{eq:constitutiveODE}
\end{equation}

Equation~(\ref{eq:constitutiveODE}) is the constitutive equation of the
static trace sector. It determines the constitutive potential directly
from the known effective vacuum solution without requiring any
independent constitutive ansatz. In this sense, the static
pseudo-complex Schwarzschild geometry defines an inverse constitutive
problem whose solution uniquely specifies the trace-sector constitutive
law within the minimal local constitutive framework developed in the
preceding sections.

%%%%%%%%%%%%%%%%%%%%%%%%%%%%%%%%%%%%%%%%%%%%%%%%%%%%%%%%%%%%%%

\subsection{Constitutive potential}

The constitutive equation
(\ref{eq:constitutiveODE}) is a first-order linear ordinary
differential equation whose solution is
\begin{equation}
W(\zeta) = W_0 \exp\!\left( \frac{n+2}{2}\zeta \right),
\label{eq:constitutivelaw} 
\end{equation}
where $W_0$ is an integration constant whose value depends on the
choice of reference configuration and is fixed, for that choice,
by matching to the known effective vacuum density.

The constitutive potential is therefore uniquely determined within the
minimal local constitutive framework. It depends exponentially on the
logarithmic transverse strain, with the exponent fixed entirely by the
effective vacuum structure of the static pseudo-complex Schwarzschild
solution.
Rather than being postulated as a constitutive ansatz, the exponential
form emerges as the solution of the inverse constitutive problem posed
by the effective vacuum sector.
This inversion parallels the reconstruction of a constitutive
potential from a known stress--strain relation in continuum mechanics
\cite{TruesdellNoll2004}.

%%%%%%%%%%%%%%%%%%%%%%%%%%%%%%%%%%%%%%%%%%%%%%%%%%%%%%%%%%%%%%

\subsection{Recovery of the static vacuum}

For a reference sphere of radius $r_0$, one may choose
\begin{equation}
\sqrt{q_0}=r_0^2\sin\theta ,
\end{equation}
so that
\begin{equation}
\zeta = -\ln\left(\frac{\sqrt q}{\sqrt{q_0}}\right) =
-2\ln\left(\frac{r}{r_0}\right).
\end{equation}
Substituting this expression into the constitutive potential
(\ref{eq:constitutivelaw}) gives
\begin{align}
W(r) &= W_0 \exp\!\left[ -(n+2) \ln\!\left( \frac{r}{r_0} \right)
  \right] \nonumber\\ &= W_0 \left( \frac{r_0}{r} \right)^{n+2}.
\end{align}
Hence,
\begin{equation}
W(r) \propto r^{-(n+2)}.
\end{equation}

The constitutive potential therefore reproduces exactly the radial
dependence of the effective vacuum energy density. Matching with
Eq.~(\ref{eq:rhoagain}) fixes the normalization,
\begin{equation}
W_0r_0^{\,n+2} = \frac{B_n} {8\pi G(n-2)},
\end{equation}
so that
\begin{equation}
W(r) = \frac{B_n} {8\pi G(n-2)} \frac{1}{r^{\,n+2}}.
\end{equation}

Thus the inverse constitutive construction reproduces the complete static
effective vacuum sector without introducing any additional
phenomenological parameters. The exponential constitutive potential
obtained from the inverse constitutive problem therefore provides an
exact constitutive representation of the known effective vacuum sector
of the static pseudo-complex Schwarzschild solution.

%%%%%%%%%%%%%%%%%%%%%%%%%%%%%%%%%%%%%%%%%%%%%%%%%%%%%%%%%%%%%%

\subsection{Interpretation}

The reconstruction carried out in this section establishes the
constitutive law of the static effective vacuum sector without
postulating a strain-energy function a priori. Instead, the known
pseudo-complex Schwarzschild solution determines the constitutive
potential through the inverse constitutive problem defined by the
effective vacuum tensor.

The reconstruction also reveals a useful geometric relation between
the radial vacuum profile and the constitutive law. The logarithmic
strain transforms the radial power-law behavior of the effective
vacuum density into a simple exponential constitutive law.

The reconstruction is unique only within the constitutive framework
developed in the preceding sections. It determines the constitutive
response of the isotropic trace sector but does not exclude more general
constitutive theories containing additional local or nonlocal invariants
that are not determined by the exact static spherically symmetric
solution.

The static pseudo-complex Schwarzschild geometry therefore provides the
minimal constitutive realization of the effective vacuum sector.
The following section examines how this constitutive framework extends
to the rotating pseudo-complex Kerr solution, where additional
transverse constitutive information becomes relevant.

%%%%%%%%%%%%%%%%%%%%%%%%%%%%%%%%%%%%%%%%%%%%%%%%%%%%%%%%%%%%%%%%%%%%%%%%%%
\section{Rotating Extension and the Shear Sector}
\label{sec:rotation}
%%%%%%%%%%%%%%%%%%%%%%%%%%%%%%%%%%%%%%%%%%%%%%%%%%%%%%%%%%%%%%%%%%%%%%%%%%

The constitutive framework developed in the preceding sections was
constructed from the exact static pseudo-complex Schwarzschild solution.
Spherical symmetry restricts the effective vacuum sector to the
isotropic trace mode of the transverse geometry, so that the
constitutive response is completely described by the logarithmic
transverse strain.

Rotation modifies this kinematic picture. Although the tensorial
structure of the effective vacuum sector is preserved, the reduction
to the static trace-sector constitutive law is no longer sufficient.
The purpose of the present section is therefore to identify the
additional transverse geometric information entering the rotating
pseudo-complex Kerr solution and to discuss its possible constitutive
interpretation in terms of the additional kinematic modes available
once spherical symmetry is broken.

%%%%%%%%%%%%%%%%%%%%%%%%%%%%%%%%%%%%%%%%%%%%%%%%%%%%%%%%%%%%%%

\subsection{Transverse stress structure}

The effective vacuum tensor of the rotating pseudo-complex Kerr solution
is most naturally analyzed in the Carter orthonormal frame. Its
principal stresses are
\begin{equation}
(-\rho,-\rho,p_\perp,p_\perp),
\label{eq:kerrEigen}
\end{equation}
so that the two transverse principal directions remain degenerate.
Consequently,
\begin{equation}
p_r=-\rho
\end{equation}
continues to hold, and the effective vacuum tensor retains the same
algebraic decomposition as in the static case,
\begin{equation}
T_{\mu\nu}^{\rm pc}
=
-\rho g_{\mu\nu}
+
(p_\perp-p_r)q_{\mu\nu}.
\end{equation}

The preservation of this tensorial structure is an important geometric
feature of the rotating solution. Rotation therefore does not alter the
fundamental decomposition into isotropic and transverse parts. Instead,
it modifies the constitutive response associated with the transverse
sector. As discussed below, this indicates that additional transverse
geometric information, absent from the static trace-sector description,
is required in the rotating case.

%%%%%%%%%%%%%%%%%%%%%%%%%%%%%%%%%%%%%%%%%%%%%%%%%%%%%%%%%%%%%%

\subsection{Geometric correction to the constitutive response}

For the rotating pseudo-complex Kerr solution, the ratio between the
transverse anisotropic stress and the effective vacuum energy density
becomes
\begin{equation}
\frac{p_\perp-p_r}{\rho} = \frac{n+2}{2} + \frac{n-2}{2}
\frac{a^2\cos^2\theta}{r^2}.
\label{eq:rotCorrection}
\end{equation}
The first term reproduces the static stress-to-density ratio of the
Schwarzschild solution, whereas the second introduces a purely
rotation-dependent correction.

The correction vanishes continuously in the nonrotating limit,
\begin{equation}
a \rightarrow 0,
\end{equation}
so that the static constitutive law is recovered exactly.

The additional dimensionless quantity
\begin{equation}
\eta = \frac{a^2\cos^2\theta}{r^2}
\end{equation}
provides a convenient measure of the rotational correction on the Kerr
background. It is not introduced as an independent constitutive
variable, but parametrizes the additional rotational dependence
of the transverse response.

%%%%%%%%%%%%%%%%%%%%%%%%%%%%%%%%%%%%%%%%%%%%%%%%%%%%%%%%%%%%%%

\subsection{Trace and shear}

Section~\ref{sec:transverse-geometry} established that the transverse
geometry possesses two irreducible kinematic modes: the trace mode,
which changes the transverse area, and the traceless shear mode, which
preserves the area to first order. The exact static pseudo-complex
Schwarzschild solution excites only the trace sector, leading to the
minimal constitutive theory developed in the preceding sections.

The rotating pseudo-complex Kerr solution indicates that the static
trace-sector description is no longer sufficient. Although the
effective vacuum tensor retains the same algebraic decomposition, the
rotational correction in Eq.~(\ref{eq:rotCorrection}) shows that additional transverse
geometric information enters the constitutive response.

Once spherical symmetry is broken, the transverse geometry admits, in
addition to the trace mode, the traceless shear sector identified in
Section~\ref{sec:transverse-geometry} and
\ref{app:representation}. Constitutive invariants constructed from
this sector therefore provide the natural candidates for extending the
minimal static theory. The known rotating stress tensor, however, does
not by itself uniquely determine such invariants or their constitutive
dependence.

A natural extension of the minimal constitutive potential may
therefore be written schematically as
\begin{equation}
W = W(\zeta,\mathcal{S}),
\end{equation}
where $\mathcal{S}$ denotes one or more scalar invariants constructed
from the transverse shear geometry. This expression is intended only
to indicate the enlarged constitutive configuration space. The present
analysis does not determine the functional dependence on $\mathcal{S}$.

%%%%%%%%%%%%%%%%%%%%%%%%%%%%%%%%%%%%%%%%%%%%%%%%%%%%%%%%%%%%%%

\subsection{Effective trace-sector response}

The rotating pseudo-complex Kerr solution does not determine the full
constitutive dependence of the potential on the transverse shear
invariants. Instead, it determines only the effective constitutive
response of the trace sector evaluated on the rotating background.
The rotating stress ratio suggests the following effective
trace-sector relation
\begin{equation}
\frac{\partial W}{\partial\zeta} = \left[ \frac{n+2}{2} +
  \frac{n-2}{2}\eta \right] W.
\end{equation}
This relation describes the rotational modification of
the trace response but does not determine the complete
constitutive dependence on the additional transverse
degrees of freedom.

A full constitutive theory of the rotating vacuum would require the
explicit identification of the independent shear invariants and their
associated constitutive response. The present analysis establishes only
the trace-sector response induced by the known rotating solution and
thereby identifies the geometric information required for a more general
constitutive description.

%%%%%%%%%%%%%%%%%%%%%%%%%%%%%%%%%%%%%%%%%%%%%%%%%%%%%%%%%%%%%%

\subsection{Constitutive interpretation}

The static and rotating solutions may therefore be viewed as probing
different sectors of the same transverse constitutive configuration space.
The static pseudo-complex Schwarzschild solution determines the constitutive
law along the trace direction, whereas the rotating solution shows that this
one-dimensional restriction is no longer sufficient.

The minimal theory developed here should accordingly be regarded as the
trace-sector restriction of a more general constitutive framework defined on
the full space of transverse metric deformations. Identifying the additional
invariants required by the rotating geometry and determining their
constitutive response remain open problems.

%%%%%%%%%%%%%%%%%%%%%%%%%%%%%%%%%%%%%%%%%%%%%%%%%%%%%%%%%%%%%%%%%%%%%%%%%%
\section{Constitutive Geometry and the Constraint Sector}
\label{sec:constraint-selection}
%%%%%%%%%%%%%%%%%%%%%%%%%%%%%%%%%%%%%%%%%%%%%%%%%%%%%%%%%%%%%%%%%%%%%%%%%%

Although inspired by classical hyperelastic constitutive theory
\cite{TruesdellNoll2004,Ogden1984,Holzapfel2000}, the present
construction remains entirely geometric and introduces no physical
material medium. Instead, it applies the mathematical structure of
hyperelastic constitutive theory to the effective vacuum sector of
pseudo-complex general relativity
\cite{HessGreiner2009,Caspar2012}.

The constitutive formulation developed in the preceding sections is
therefore complementary to the constrained variational formulation of
pseudo-complex gravity
\cite{Weber2026Variational}. The latter establishes the geometric
origin of the effective vacuum sector, whereas the present work
investigates a natural constitutive description of the effective
vacuum tensor generated by that geometric sector.

%%%%%%%%%%%%%%%%%%%%%%%%%%%%%%%%%%%%%%%%%%%%%%%%%%%%%%%%%%%%%%

\subsection{From projected stationarity to constitutive geometry}

The constrained variational formulation of pseudo-complex gravity shows
that the effective vacuum sector does not arise from independently
introduced matter fields. Instead, it emerges as the variational
response of a geometric constraint sector implementing the projected
pseudo-complex stationarity condition. In this sense, the constrained
variational principle explains the geometric origin of the effective
vacuum source.

The present work addresses a complementary question. Rather than asking
why an effective geometric source exists, we investigate whether its
tensorial structure admits a natural constitutive interpretation. The
analysis developed in the preceding sections demonstrates that the
effective vacuum sector is naturally described by a minimal local
constitutive theory defined on the transverse geometry.

The two approaches therefore address different levels of the theory.
The constrained variational formulation determines the geometric origin
of the effective vacuum sector, whereas the present constitutive
formulation provides a geometric organization of its effective stress
tensor. Taken together, these results suggest that the projected
effective vacuum sector may be represented as a constitutive response
of the geometric constraint structure associated with the
pseudo-complex variational principle.

%%%%%%%%%%%%%%%%%%%%%%%%%%%%%%%%%%%%%%%%%%%%%%%%%%%%%%%%%%%%%%

\subsection{The role of the transverse geometry}

The decomposition
\begin{equation}
T_{\mu\nu}^{\rm pc} = -\rho g_{\mu\nu} + (p_\perp-p_r)q_{\mu\nu}
\end{equation}
shows that the anisotropic contribution to the effective vacuum sector
acts exclusively on the transverse geometry. This observation suggests
that the relevant constitutive degrees of freedom are associated not
with arbitrary deformations of the four-dimensional spacetime metric but
with the intrinsic geometry of the transverse two-space.

The constitutive framework developed in the present work is based
precisely on this observation.
Rather than introducing an independent
effective medium or additional matter fields, it organizes the
constitutive response in terms of the irreducible kinematics of the
transverse geometry.
In particular, the logarithmic transverse strain $\zeta$ is not an
additional dynamical scalar field. It is constructed directly from the
intrinsic transverse metric and serves only as a constitutive coordinate
for its isotropic area-deformation mode.
The logarithmic transverse strain and its associated shear sector
provide the natural constitutive variables identified by this
geometric decomposition, in analogy with metric-based formulations
of two-dimensional elasticity
\cite{EfratiSharonKupferman2009,Yavari2026}.

From this perspective, the undetermined geometric constraint sector of
the projected pseudo-complex theory is not replaced but endowed with a
minimal constitutive description at the level of its effective vacuum
tensor. The transverse geometry therefore provides the natural
configuration space on which the constitutive formulation is built.

%%%%%%%%%%%%%%%%%%%%%%%%%%%%%%%%%%%%%%%%%%%%%%%%%%%%%%%%%%%%%%

\subsection{Minimal constitutive realization}

The constrained variational formulation determines the existence of an
effective geometric vacuum sector but does not uniquely specify its
constitutive description. The analysis developed in the present work
shows that, within the class of minimal local constitutive theories
introduced in Section~\ref{sec:minimal-action}, the logarithmic
transverse strain provides the natural constitutive variable governing
the static trace sector. Its minimality within this constitutive class
was established in \ref{app:minimality}.

The resulting constitutive action reproduces the complete static
effective vacuum sector of the pseudo-complex Schwarzschild solution.
Accordingly, the hyperelastic constitutive action should be interpreted
as a minimal local constitutive realization of the effective vacuum
sector generated by the geometric constraint structure. It is not a
derivation of the underlying pseudo-complex theory but a constitutive
description of its projected effective vacuum response.

%%%%%%%%%%%%%%%%%%%%%%%%%%%%%%%%%%%%%%%%%%%%%%%%%%%%%%%%%%%%%%

\subsection{Scope of the present work}

It is important to distinguish the results established here from
stronger claims that are beyond the scope of the present analysis.

The constitutive framework developed in this work does not derive
pseudo-complex gravity from elasticity, nor does it establish that the
logarithmic transverse strain follows uniquely from the
pseudo-complex variational principle itself. Instead, the analysis
shows that
\begin{enumerate}
\item
the effective vacuum sector admits a local constitutive description;
\item
within the present constitutive framework, the logarithmic
transverse strain is the natural constitutive variable governing the
static trace sector;
\item
the known static pseudo-complex Schwarzschild solution uniquely
determines the corresponding trace-sector constitutive potential.
\end{enumerate}
These results organize the effective vacuum sector within a systematic
constitutive framework while remaining fully consistent with the
underlying constrained variational formulation.

%%%%%%%%%%%%%%%%%%%%%%%%%%%%%%%%%%%%%%%%%%%%%%%%%%%%%%%%%%%%%%

\subsection{Perspective and selection principle}

Viewed together, the constrained variational formulation and the
constitutive framework developed in the present work suggest a natural
hierarchy.
The pseudo-complex variational principle gives rise to the projected
stationarity condition and thereby establishes the geometric origin of
the effective vacuum sector. The present work complements this
construction by providing a constitutive organization of the resulting
effective vacuum response through the transverse geometry and its
associated constitutive variables.
From this perspective, the constrained variational formulation and the
constitutive formulation address different levels of the theory. The
former explains the geometric origin of the effective vacuum sector,
whereas the latter provides a minimal local constitutive description of
its effective stress tensor. The two approaches are therefore
complementary rather than competing.

This viewpoint also suggests a possible criterion for selecting
admissible effective constraint sectors. Rather than allowing arbitrary
covariant tensors to represent the effective geometric source, one may
require that the resulting effective vacuum tensor admit a local
constitutive description on the transverse geometry. Such a
constitutive admissibility condition would substantially reduce the
freedom of the effective sector while remaining compatible with both
covariance and the projected pseudo-complex field equations. Whether
this criterion is sufficiently restrictive to identify a preferred
constraint sector remains an open question.

The constitutive framework developed here should therefore be regarded
as a minimal local organization of the effective vacuum sector generated
by the constrained pseudo-complex variational principle. It does not
replace the underlying variational theory but provides a systematic
geometric language for describing its projected effective vacuum
response.

%%%%%%%%%%%%%%%%%%%%%%%%%%%%%%%%%%%%%%%%%%%%%%%%%%%%%%%%%%%%%%%%%%%%%%%%%%
\section{Discussion and Outlook}
%%%%%%%%%%%%%%%%%%%%%%%%%%%%%%%%%%%%%%%%%%%%%%%%%%%%%%%%%%%%%%%%%%%%%%%%%%

The principal result of the present work is the demonstration that the
effective vacuum sector of pseudo-complex general relativity admits a
natural constitutive description based on the intrinsic geometry of the
transverse two-space. Rather than interpreting the projected effective
source as an anisotropic vacuum fluid, the analysis shows that it can be
organized as the constitutive response of the transverse geometry.

Within the class of minimal local constitutive theories considered here,
the logarithmic transverse strain provides the natural constitutive
variable governing the static trace sector. The corresponding
constitutive action, Eq.~(\ref{eq:constitutiveAction}), generates the constitutive
stress tensor, Eq.~(\ref{eq:constitutiveTensor}), which reproduces the complete
algebraic structure of the effective pseudo-complex vacuum tensor.
Matching this constitutive response to the known static
pseudo-complex Schwarzschild solution yields the inverse constitutive
equation, Eq.~(\ref{eq:constitutiveODE}), whose solution uniquely
determines the trace-sector constitutive potential,
Eq.~(\ref{eq:constitutivelaw}). The resulting constitutive law reproduces the
complete radial dependence of the effective vacuum density without
introducing additional phenomenological parameters.

The rotating pseudo-complex Kerr solution further supports the need for
a constitutive description extending beyond the static trace sector.
Although the tensorial decomposition of the effective vacuum sector
remains unchanged, rotation introduces additional transverse geometric
dependence that is not captured by the static trace-sector law. Within
the irreducible decomposition of the transverse geometry, the shear
sector provides the natural additional kinematic channel. Its complete
constitutive dependence, however, is not determined by the presently
known rotating solution.

It is important to distinguish the conclusions established here from
stronger claims that lie beyond the scope of the present work. The
analysis does not derive pseudo-complex gravity from elasticity, nor
does it establish that the logarithmic transverse strain follows
uniquely from the pseudo-complex variational principle itself. Instead,
it demonstrates that the effective vacuum sector admits a systematic
constitutive description that is fully consistent with the constrained
variational formulation.

Several important questions remain open.
\begin{enumerate}
\item Can the logarithmic transverse strain be obtained directly from
the underlying pseudo-complex variational principle rather than from
the effective vacuum sector?

\item What additional transverse constitutive invariants are required
for the rotating pseudo-complex Kerr solution, and to what extent can
their constitutive dependence be reconstructed from the known rotating
vacuum sector?

\item Does constitutive admissibility provide a useful criterion for
selecting physically preferred geometric constraint sectors within the
general constrained variational formulation?

\item Can the constitutive framework be extended beyond stationary
spacetimes to dynamical solutions, gravitational waves, and
cosmological models?
\end{enumerate}

More broadly, the present work suggests that constitutive geometry may
provide a natural mathematical language for organizing the effective
vacuum sector generated by the constrained pseudo-complex variational
principle. Whether this constitutive structure reflects a deeper
property of the underlying pseudo-complex geometry or should instead be
regarded as an effective description remains an open question.
Nevertheless, the existence of a minimal local constitutive realization
demonstrates that the effective vacuum sector possesses a previously
unrecognized geometric organization extending beyond its traditional
interpretation as an anisotropic vacuum fluid.

%%%%%%%%%%%%%%%%%%%%%%%%%%%%%%%%%%%%%%%%%%%%%%%%%%%%%%%%%%%%%%%%%%%%%%%%%%%%%%%%%

%\backmatter

%\section*{Acknowledgements}
%P.O.H. acknowledges financial support from PAPIIT-DGAPA (IN116824).

\section*{Declarations}

\begin{itemize}

\item \textbf{Conflict of interest/Competing interests} \\
The authors declare that they have no competing interests.

\item \textbf{Ethics approval and consent to participate} \\
Not applicable.

\item \textbf{Consent for publication} \\
Not applicable.

\item \textbf{Data availability} \\
No datasets were generated or analyzed during the current study.

\item \textbf{Materials availability} \\
Not applicable.

\item \textbf{Code availability} \\
No code was used or generated in this study.

\item \textbf{Author contribution} \\
All authors contributed equally to the conception, analysis, and writing of this work.

\end{itemize}

\bigskip

%%%%%%%%%%%%%%%%%%%%%%%%%%%%%%%%%%%%%%%%%%%%%%%%%%%%%%%%%%%%%%%%%%%%%%%%%%%%%%%%%

%%%%%%%%%%%%%%%%%%%%%%%%%%%%%%%%%%%%%%%%%%%%%%%%%%%%%%%%%%%%%%%%%%%%%%%%%%
\appendix
%%%%%%%%%%%%%%%%%%%%%%%%%%%%%%%%%%%%%%%%%%%%%%%%%%%%%%%%%%%%%%%%%%%%%%%%%%

%%%%%%%%%%%%%%%%%%%%%%%%%%%%%%%%%%%%%%%%%%%%%%%%%%%%%%%%%%%%%%%%%%%%%%%%%%
\section{Variations of the Transverse Geometry}
\label{app:geometry}
%%%%%%%%%%%%%%%%%%%%%%%%%%%%%%%%%%%%%%%%%%%%%%%%%%%%%%%%%%%%%%%%%%%%%%%%%%

This appendix collects the geometric identities underlying the
constitutive formulation developed in the main text. In particular, it
establishes the relation between variations of the transverse metric,
the corresponding transverse area element, and the logarithmic
transverse strain used throughout the constitutive construction.

%%%%%%%%%%%%%%%%%%%%%%%%%%%%%%%%%%%%%%%%%%%%%%%%%%%%%%%%%%%%%%

\subsection{Variation of the transverse determinant}

Let $q_{AB}$ denote the induced metric on the two-dimensional
transverse manifold, with determinant $q=\det(q_{AB})$.
The inverse metric satisfies
\begin{equation}
q^{AC}q_{CB} = \delta^A_{\;B}.
\end{equation}
The variation of the determinant follows from Jacobi's identity,
\begin{equation}
\delta q = q\,q^{AB}\delta q_{AB}.  
\label{eq:JacobiIdentity}
\end{equation}
Dividing by \(q\) immediately gives
\begin{equation}
 \delta\ln q = q^{AB}\delta q_{AB}.  
\label{eq:dlogq}
\end{equation}
Since
\begin{equation}
\ln\sqrt q = \frac12\ln q,
\end{equation}
one obtains
\begin{equation}
 \delta\ln\sqrt q = \frac12 q^{AB}\delta q_{AB}.  
\label{eq:dlogsqrtqApp}
\end{equation}

Equation (\ref{eq:dlogsqrtqApp}) shows that the trace of the
transverse metric variation is precisely the fractional variation of
the local transverse area element.

%%%%%%%%%%%%%%%%%%%%%%%%%%%%%%%%%%%%%%%%%%%%%%%%%%%%%%%%%%%%%%

\subsection{Logarithmic transverse strain}

Introduce the logarithmic transverse area strain
\begin{equation}
\zeta = - \ln \left( \frac{\sqrt q} {\sqrt{q_0}} \right),
\end{equation}
where \(q_0\) denotes a fixed reference configuration.
Since
\begin{equation}
\delta q_0=0,
\end{equation}
Eq.~(\ref{eq:dlogsqrtqApp}) immediately yields
\begin{equation}
 \delta\zeta = - \frac12 q^{AB}\delta q_{AB}.  
\label{eq:deltazetaAppendix}
\end{equation}

The logarithmic transverse strain is therefore conjugate to the trace
deformation of the transverse metric and provides the natural scalar
coordinate along the isotropic deformation direction.

%%%%%%%%%%%%%%%%%%%%%%%%%%%%%%%%%%%%%%%%%%%%%%%%%%%%%%%%%%%%%%

\subsection{Covariant and contravariant metric variations}

The inverse transverse metric satisfies
\begin{equation}
q^{AC}q_{CB} = \delta^A_{\;B}.
\end{equation}
Taking the variation gives
\begin{equation}
(\delta q^{AC})q_{CB} + q^{AC}\delta q_{CB} = 0,
\end{equation}
which is  equivalent to
\begin{equation}
 \delta q_{AB} = - q_{AC}q_{BD} \delta q^{CD}.  
\label{eq:inversevariation}
\end{equation}
Contracting with \(q^{AB}\) gives
\begin{equation}
q^{AB}\delta q_{AB} = - q_{AB}\delta q^{AB},
\end{equation}
and therefore
\begin{equation}
\delta\zeta = \frac12 q_{AB} \delta q^{AB}.  
\label{eq:contravariantvariation}
\end{equation}

%%%%%%%%%%%%%%%%%%%%%%%%%%%%%%%%%%%%%%%%%%%%%%%%%%%%%%%%%%%%%%

\subsection{Embedding into spacetime}

The transverse metric is the pullback of the spacetime projector
\begin{equation}
q_{\mu\nu} = g_{\mu\nu} + u_\mu u_\nu - s_\mu s_\nu .
\end{equation}
Within the restricted constitutive variation considered in the present
work, the timelike and radial directions defining the transverse
splitting are held fixed, and the metric variation is restricted to the
induced transverse geometry. Under this restricted variation, the
pullback of the spacetime metric to the transverse manifold satisfies
\begin{equation}
\delta q_{AB}=\delta g_{AB},
\end{equation}
and hence
\begin{equation}
  \delta\zeta = \frac{1}{2}q_{\mu\nu}\delta g^{\mu\nu}.
  \label{eq:spacetimevariationAppendix}
\end{equation}
This is the restricted metric variation used in Section~\ref{sec:minimal-action} to define
the minimal transverse constitutive response.

%%%%%%%%%%%%%%%%%%%%%%%%%%%%%%%%%%%%%%%%%%%%%%%%%%%%%%%%%%%%%%

\subsection{Geometric interpretation}

Equations~(\ref{eq:dlogsqrtqApp}) and
(\ref{eq:spacetimevariationAppendix}) show that the logarithmic
transverse strain measures infinitesimal changes of the local
transverse area. Indeed,
\begin{equation}
\delta\sqrt q
=
\sqrt q\,\delta\ln\sqrt q
=
-\sqrt q\,\delta\zeta.
\end{equation}
The logarithmic transverse strain therefore provides the natural scalar
coordinate along the isotropic direction in the space of transverse
metrics. It measures infinitesimal changes of the transverse area while
remaining insensitive to traceless shear deformations.

By contrast, traceless transverse deformations satisfy
\begin{equation}
q^{AB}\delta q^{\rm TF}_{AB}=0,
\end{equation}
and therefore leave both the transverse area and the logarithmic
strain unchanged to first order.

The corresponding irreducible decomposition of transverse metric
variations into trace and shear sectors is developed in
\ref{app:representation}.

%%%%%%%%%%%%%%%%%%%%%%%%%%%%%%%%%%%%%%%%%%%%%%%%%%%%%%%%%%%%%%%%%%%%%%%%%%
\section{Representation-Theoretic Structure of the Transverse Geometry}
\label{app:representation}
%%%%%%%%%%%%%%%%%%%%%%%%%%%%%%%%%%%%%%%%%%%%%%%%%%%%%%%%%%%%%%%%%%%%%%%%%%

The constitutive formulation developed in the main text relies on the
fact that infinitesimal deformations of the transverse geometry admit a
natural decomposition into irreducible components under the local
rotation group acting on the transverse tangent space. The purpose of
this appendix is to derive this decomposition and to identify the
corresponding trace and shear sectors.

%%%%%%%%%%%%%%%%%%%%%%%%%%%%%%%%%%%%%%%%%%%%%%%%%%%%%%%%%%%%%%

\subsection{Space of metric deformations}

Let $q_{AB}$ denote the metric of the two-dimensional transverse
manifold. An arbitrary infinitesimal deformation is represented by the
symmetric tensor
\begin{equation}
\delta q_{AB}
=
\delta q_{BA}.
\end{equation}
Since the transverse manifold is two-dimensional,
\[
A,B=1,2,
\]
and the space of symmetric rank-two tensors possesses three independent
components.

%%%%%%%%%%%%%%%%%%%%%%%%%%%%%%%%%%%%%%%%%%%%%%%%%%%%%%%%%%%%%%%%%%%%%%%%%%

\subsection{Trace decomposition}

Every symmetric rank-two tensor on a two-dimensional Riemannian
manifold admits the unique decomposition into trace and traceless
parts,
\begin{equation}
\delta q_{AB}
=
\delta q^{\rm tr}_{AB}
+
\delta q^{\rm TF}_{AB},
\label{eq:traceSplitAppendix}
\end{equation}
where
\begin{equation}
\delta q^{\rm tr}_{AB}
=
\frac12 q_{AB}q^{CD}\delta q_{CD},
\end{equation}
and
\begin{equation}
\delta q^{\rm TF}_{AB}
=
\delta q_{AB}
-
\frac12 q_{AB}q^{CD}\delta q_{CD}.
\end{equation}

The traceless property follows immediately by contraction with the
inverse metric:
\begin{align}
q^{AB}\delta q^{\rm TF}_{AB}
&=
q^{AB}\delta q_{AB}
-
\frac12
q^{AB}q_{AB}
q^{CD}\delta q_{CD}
\nonumber\\
&=
q^{AB}\delta q_{AB}
-
q^{CD}\delta q_{CD}
\nonumber\\
&=
0,
\end{align}
where
\[
q^{AB}q_{AB}=2
\]
has been used.

The trace and traceless tensors are therefore orthogonal under
contraction with the transverse metric and provide the unique
irreducible decomposition of infinitesimal transverse metric
deformations.

%%%%%%%%%%%%%%%%%%%%%%%%%%%%%%%%%%%%%%%%%%%%%%%%%%%%%%%%%%%%%%

\subsection{Uniqueness}

To establish uniqueness, suppose that
\[
\delta q_{AB}
=
X_{AB}
+
Y_{AB},
\]
where
\[
X_{AB}
=
\alpha q_{AB},
\]
is purely trace, while
\[
q^{AB}Y_{AB}=0
\]
is traceless.

Taking the trace of the decomposition gives
\[
2\alpha
=
q^{AB}\delta q_{AB},
\]
and therefore
\[
\alpha
=
\frac12 q^{AB}\delta q_{AB}.
\]
Substituting this expression back into the decomposition yields
\[
X_{AB}
=
\frac12 q_{AB}q^{CD}\delta q_{CD},
\]
and
\[
Y_{AB}
=
\delta q_{AB}
-
\frac12 q_{AB}q^{CD}\delta q_{CD}.
\]

Both components are therefore uniquely determined by
$\delta q_{AB}$, proving that the decomposition
(\ref{eq:traceSplitAppendix}) is unique.

%%%%%%%%%%%%%%%%%%%%%%%%%%%%%%%%%%%%%%%%%%%%%%%%%%%%%%%%%%%%%%

\subsection{Representation-theoretic interpretation}

At each point of the transverse two-dimensional manifold, the tangent
space carries the natural action of the local rotation group
\[
SO(2).
\]
The space of symmetric rank-two tensors therefore decomposes into the
irreducible representations
\begin{equation}
\mathrm{Sym}^2(T^*\Sigma)
=
\mathbf{1}
\oplus
\mathbf{2},
\end{equation}
where $\mathbf{1}$ denotes the one-dimensional trace representation and
$\mathbf{2}$ the two-dimensional traceless symmetric representation.

The trace representation corresponds to isotropic changes of the
transverse area, whereas the traceless representation describes shear
deformations that preserve the area to first order. This decomposition
is a purely geometric property of symmetric tensors on a
two-dimensional Riemannian manifold and is therefore independent of the
pseudo-complex field equations.

The constitutive formulation developed in the main text exploits this
representation-theoretic structure by identifying the static
pseudo-complex Schwarzschild solution with the trace sector. Once
spherical symmetry is broken, as in the rotating pseudo-complex Kerr
solution, the traceless shear representation provides the natural
additional kinematic sector available to the transverse geometry.

%%%%%%%%%%%%%%%%%%%%%%%%%%%%%%%%%%%%%%%%%%%%%%%%%%%%%%%%%%%%%%

\subsection{Area and shear}

The representation-theoretic decomposition has a direct geometric
interpretation. Since
\begin{equation}
\delta\ln\sqrt q = \frac12 q^{AB}\delta q_{AB},
\end{equation}
only the trace component contributes to the variation of the transverse
area element. By contrast, the traceless component satisfies
\begin{equation}
q^{AB}\delta q^{\rm TF}_{AB}=0,
\end{equation}
and therefore preserves the transverse area to first order.

Accordingly, the two irreducible sectors have distinct geometric roles:
the trace mode describes isotropic changes of the local transverse area,
whereas the traceless mode describes area-preserving shear
deformations.

%%%%%%%%%%%%%%%%%%%%%%%%%%%%%%%%%%%%%%%%%%%%%%%%%%%%%%%%%%%%%%

\subsection{Constitutive significance}

The decomposition derived above provides the geometric foundation of the
constitutive formulation developed in the main text. For the exact static
pseudo-complex Schwarzschild solution, spherical symmetry restricts the
constitutive response to the trace representation, while the traceless
representation supplies the additional kinematic sector available once
spherical symmetry is broken.

The representation-theoretic decomposition therefore provides the natural
kinematic framework for extending the minimal trace-sector constitutive
theory to more general pseudo-complex spacetimes.

%%%%%%%%%%%%%%%%%%%%%%%%%%%%%%%%%%%%%%%%%%%%%%%%%%%%%%%%%%%%%%%%%%%%%%%%%%
\section{Minimality of the Logarithmic Transverse Strain}
\label{app:minimality}
%%%%%%%%%%%%%%%%%%%%%%%%%%%%%%%%%%%%%%%%%%%%%%%%%%%%%%%%%%%%%%%%%%%%%%%%%%

This appendix establishes that the logarithmic transverse area strain
introduced in the main text is the natural scalar constitutive coordinate
within the class of local constitutive theories considered in this work.
The result should not be interpreted as a uniqueness theorem for the
full pseudo-complex theory. Rather, it identifies the minimal local
constitutive scalar associated with isotropic transverse deformation.

%%%%%%%%%%%%%%%%%%%%%%%%%%%%%%%%%%%%%%%%%%%%%%%%%%%%%%%%%%%%%%

\subsection{Constitutive assumptions}

The proof is restricted to scalar constitutive variables satisfying the
following assumptions:
\begin{enumerate}
\item They depend locally on the transverse metric,
\[
\psi=\psi(q_{AB}).
\]

\item They depend only algebraically on the transverse metric and
contain no derivatives of $q_{AB}$.

\item Their first variation is proportional to the isotropic trace
deformation with a state-independent normalization,
\begin{equation}
  \delta\psi = C\,q^{AB}\delta q_{AB},
  \label{eq:tracecondition}
\end{equation}
where $C$ is a constant.
\end{enumerate}

The third assumption is stronger than requiring only that the variation
be purely trace. It selects a constitutive coordinate whose differential
is directly proportional to the transverse trace projector with a
state-independent normalization.

%%%%%%%%%%%%%%%%%%%%%%%%%%%%%%%%%%%%%%%%%%%%%%%%%%%%%%%%%%%%%%

\subsection{Minimality theorem}

\begin{theorem}[Minimality of the logarithmic transverse area strain]
Let $\psi(q_{AB})$ satisfy the constitutive assumptions stated above.
Then
\begin{equation}
\psi = 2C\ln\sqrt{q} + \mathrm{const.},
\end{equation}
where
\[
q=\det(q_{AB}).
\]
Thus, up to an overall normalization and an additive constant, the
logarithmic transverse area is the unique local constitutive coordinate
whose first variation is proportional to the transverse trace
deformation with a state-independent proportionality factor.
\end{theorem}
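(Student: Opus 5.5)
The plan is to regard $\psi$ as a smooth function on the open cone $\mathcal{P}$ of positive-definite symmetric $2\times2$ matrices $q_{AB}$, which is the configuration space fixed by assumptions 1 and 2. The variational condition \eqref{eq:tracecondition} then becomes a statement about the differential of this function. Condition \eqref{eq:tracecondition} says that for every symmetric $\delta q_{AB}$
\begin{equation}
\frac{\partial\psi}{\partial q_{AB}}\,\delta q_{AB} = C\,q^{AB}\,\delta q_{AB},
\end{equation}
with $C$ independent of $q_{AB}$. Since $\delta q_{AB}$ ranges over all symmetric tensors, this is equivalent to the first-order system $\partial\psi/\partial q_{AB} = C\,q^{AB}$ on $\mathcal{P}$. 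Here $q^{AB}$ is the inverse metric, and the derivative is understood symmetrically.

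Next I would produce an explicit solution. By Jacobi's identity \eqref{eq:JacobiIdentity}, or equivalently \eqref{eq:dlogq}, the function $\psi_0 = C\ln q = 2C\ln\sqrt q$ satisfies
\begin{equation}
\delta\psi_0 = C\,q^{AB}\delta q_{AB}.
\end{equation}
So $\psi_0$ solves the system, and no further integrability check is needed: the one-form $C\,q^{AB}\,dq_{AB}$ is exact, being $d(C\ln\det q)$.

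For uniqueness, I would set $\chi=\psi-\psi_0$, which has $\delta\chi=0$ for all variations at every point of $\mathcal{P}$. The cone $\mathcal{P}$ is convex, since $t q + (1-t) q'$ is positive definite, and hence connected. A function with vanishing differential on a connected open set is constant, so integrating along straight segments gives $\chi=\mathrm{const.}$ This yields $\psi = 2C\ln\sqrt q + \mathrm{const.}$ Choosing $C=-1/2$ and fixing the constant with a reference determinant $q_0$ then recovers $\zeta$ in \eqref{eq:zeta}, and hence \eqref{eq:deltazeta}.

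The only step needing care is the justification that "first variation" means the full differential on $\mathcal{P}$. If variations were restricted, for instance to trace deformations only, the traceless directions would be unconstrained and $\psi$ could pick up arbitrary shear invariants such as functions of $q_{AB}$ relative to a fixed background. I would therefore state explicitly that \eqref{eq:tracecondition} holds for arbitrary symmetric $\delta q_{AB}$, so that it forces $\partial\psi/\partial q_{AB}$ to be pure trace with coefficient $C$. I would also remark that a state-dependent $C(q)$ would break exactness, so the constancy of $C$ is exactly what selects the logarithm.
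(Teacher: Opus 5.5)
Your proposal is correct and follows the same route as the paper. You read the trace condition for arbitrary symmetric $\delta q_{AB}$ as $\partial\psi/\partial q_{AB}=C\,q^{AB}$, recognize it via Jacobi's identity as $\delta\psi=C\,\delta\ln q$, and integrate; your observation that the positive-definite cone is convex, hence connected, justifies the single integration constant, which the paper leaves implicit.

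One small correction concerns your closing remark. A state-dependent $C$ that depends only on $\det q$ keeps the one-form $C\,q^{AB}dq_{AB}$ exact and yields $\psi=F(\det q)$ for an arbitrary function $F$. So closedness alone already forces $\psi$ to depend only on the area, and it is the constancy of $C$ that then selects the logarithm.
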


\begin{proof}
By locality,
\[
\psi=\psi(q_{AB}),
\]
and therefore
\begin{equation}
\delta\psi = \frac{\partial\psi}{\partial q_{AB}} \delta q_{AB}.
\end{equation}
Comparison with Eq.~(\ref{eq:tracecondition}) gives
\begin{equation}
\frac{\partial\psi}{\partial q_{AB}} = Cq^{AB}.
\label{eq:partialcondition}
\end{equation}
Using Jacobi's identity,
\begin{equation}
\delta\ln q = q^{AB}\delta q_{AB},
\end{equation}
it follows that
\begin{equation}
\delta\psi = C\,\delta\ln q.
\end{equation}
Integration therefore yields
\begin{equation}
\psi = C\ln q +\mathrm{const.}  = 2C\ln\sqrt q +\mathrm{const.},
\end{equation}
which proves the theorem.
\end{proof}

%%%%%%%%%%%%%%%%%%%%%%%%%%%%%%%%%%%%%%%%%%%%%%%%%%%%%%%%%%%%%%

\subsection{Relation to the logarithmic strain}

Introducing the reference configuration \(q_0\),
\begin{equation}
\zeta = -\ln\left(\frac{\sqrt q}{\sqrt{q_0}}\right),
\end{equation}
one finds
\begin{equation}
\delta\zeta = -\frac12 q^{AB}\delta q_{AB},
\end{equation}
which is precisely the constitutive variation required in the main
text.

The logarithmic transverse strain therefore realizes the minimal
constitutive scalar identified by the theorem. Relative to the general
solution, it differs only by the conventional normalization
\(C=-\tfrac12\) and by the choice of the reference configuration
\(q_0\), which fixes the additive integration constant.

Consequently, the logarithmic transverse strain is not an ad hoc
constitutive choice but the natural local scalar associated with
isotropic transverse deformation within the class of constitutive
theories considered in this work.

The scope of the theorem should also be emphasized. It applies only to
local constitutive scalars that depend algebraically on the transverse
metric and whose first variation is proportional to the isotropic
trace deformation with a state-independent normalization.

The theorem does not exclude additional constitutive invariants
constructed from transverse shear, intrinsic curvature, or derivatives
of the transverse metric. Such quantities are not fixed by the static
trace-sector matching considered in the present work and may become
relevant in more general constitutive descriptions of rotating or
dynamical pseudo-complex spacetimes.

Accordingly, the logarithmic transverse strain should be viewed as the
minimal constitutive variable governing the static trace sector rather
than as the unique scalar characterizing the complete constitutive
geometry.

%%%%%%%%%%%%%%%%%%%%%%%%%%%%%%%%%%%%%%%%%%%%%%%%%%%%%%%%%%%%%%

%\funding{P.O.H. acknowledges financial support from PAPIIT-DGAPA (IN116824).}

\bibliographystyle{spphys}
\bibliography{references}

\end{document}